\documentclass[12pt]{amsart}

\usepackage{amsmath,amssymb, amscd}
\usepackage[dvipdfmx]{graphicx}

\makeatletter
\@addtoreset{equation}{section}
\makeatother
\setlength{\textwidth}{16 truecm}
\setlength{\hoffset}{-2.0 truecm}
\usepackage{enumerate}
\usepackage{color}
\usepackage[all]{xy}

\setlength{\textwidth}{16 truecm}
\setlength{\hoffset}{-2.0 truecm}
\usepackage{enumerate}

\def\ii{{\sqrt{-1}}}

\def\ee{\mathrm e}

\def\SL{\mathrm {SL}}

\def\CC{{\mathbb C}}
\def\ZZ{{\mathbb Z}}

\def\QQ{{\mathbb Q}}
\def\RR{{\mathbb R}}

\def\cJ{{\mathcal{J}}}

\def\cZ{{\mathcal{Z}}}
\def\cN{{\mathcal{N}}}
\def\cG{{\mathcal{G}}}
\def\cS{{\mathcal{S}}}

\def\cA{{\mathcal{A}}}
\def\cD{{\mathcal{D}}}

\def\PP{{\mathbb P}}

\def\hzeta{\hat{\zeta}}

\newtheorem{proposition}{Proposition}[section]

\newtheorem{remark}{Remark}[section]
\newtheorem{lemma}{Lemma}[section]

\def\book#1{\rm{#1}, }
\def\paper#1{\textit{#1}, }
\def\jour#1{\rm{#1}, }
\def\yr#1{({\rm{#1}) }}
\def\vol#1{\textbf{#1}}
\def\pages#1{\rm{#1}}

\def\by#1{{\rm{#1}, }}

\pagestyle{plain}
\begin{document}

\title{Trigonal Toda lattice Equation}

\author{S.~Matsutani}

\begin{abstract}
In this article, we give the trigonal Toda lattice equation,
$$
-\frac{1}{2}\frac{d^3}{dt^3} q_{\ell}(t) 
= \ee^{q_{\ell+1}(t)}
+\ee^{q_{\ell+\zeta_3}(t)}
+\ee^{q_{\ell+\zeta_3^2}(t)}-3\ee^{q_\ell(t)},
$$
for a lattice point $\ell \in \mathbb{Z}[\zeta_3]$ as a 
directed 6-regular graph where $\zeta_3=\ee^{2\pi\ii/3}$, 
and its elliptic solution for the curve $y(y-s)=x^3$,
($s\neq 0$).
\end{abstract}

\keywords{Toda lattice equation; elliptic function; 6-regular graph lattice; third order differential; graph Laplacian}

\maketitle

\section{Introduction}
The elliptic functions have high symmetries and generate many interesting
relations.
In the celebrated paper \cite{To}, Toda derived the Toda lattice equation
based on the addition formula of the elliptic functions.
Using the addition formulae of hyperelliptic curves \cite{EEMOP},
the hyperelliptic quasi-periodic solutions of the Toda lattice equation
are also obtained as in \cite{M, KdMP}. The derivation in \cite{M, KdMP}
can be regarded as
a natural generalization of Toda's original one.
The addition formulae for the Toda lattice equation are essential.

Recently  Eilbeck, Matsutani and \^Onishi introduced a new 
addition formula for the Weierstrass $\wp$ functions on an elliptic curve
$E$, $y(y-s)=x^3$, 
which is called the equiharmonic elliptic curve \cite{O}.
The curve $E$ has the automorphism, 
 the cyclic group action of order three as a Galois action \cite{EMO},
i.e., $\hzeta_3(x,y)=(\zeta_3x, y)$, 
where $\zeta_3=\ee^{2\pi\ii/3}$.

In this article, we use the new addition formula on $E$ in \cite{EMO}
and derive a non-linear differential and difference equation 
following the derivation in \cite{To,M, KdMP} 
as shown in Proposition \ref{prop:toda02}.
 Thus we call it {\it{the trigonal Toda lattice equation}}. 
The trigonal Toda lattice equation consists of
the third order differential and the trigonal difference operators,
which reflects the cyclic symmetry of the curve.
The difference operator agrees with the graph Laplacian 
of a directed 6-regular graph associated with Eisenstein integers 
$\ZZ[\zeta_3]$. 
The trigonal Toda lattice equation is defined over the infinite
directed 6-regular graph c.f. Proposition \ref{prop:toda03}.
It means that we provide the trigonal Toda lattice equation and
its elliptic function solution as a special solution.
Since the lattice given by the infinite
directed 6-regular graph appear in models in statistical
mechanics, e.g., \cite{B}, the new Toda lattice equation
 might show a nonlinear excitation of such models.

The contents are as follows.
In Section 2, we show the properties of the elliptic curve $E$.
We derive a new differential-difference equation, or
the trigonal Toda lattice equation, and its
elliptic function solution as an identity in the meromorphic
functions on $E$ in Section 3.
We give some discussions in Section 4.

\bigskip

\noindent
{\bf{Acknowledgments:}}
I would like to thank Yuji Kodama for helpful comments
and pointing out the Chazy equation, 
and Yoshihiro \^Onishi for valuable discussions. 
Further I am grateful to the two anonymous referees for 
their helpful comments and suggestions.

\section{Properties of the equiharmonic elliptic curve}
Let us consider an elliptic curve $E$ given by 
the affine equation
\begin{equation}
    y(y-s) = x^3,
\label{eq:yysx3}
\end{equation}
which is called the equiharmonic elliptic curve \cite{EMO, O}.
$E$ has the automorphism associated with the cyclic group of order three
as the Galois action on $E$; $\hzeta_3: E \to E$, 
$\hzeta_3((x,y))=(\zeta_3 x,  y)$ where $\zeta_3 = \ee^{2\pi\ii/3}$;
the action $\hzeta_3$ on $E$ is invariant.
We call it the trigonal cyclic symmetry.
The affine equation is expressed by
$$
\left(y-\frac{s}{2}\right)^2 = 
\left(x+\sqrt[3]{\frac{s^2}{4}}\right)
\left(x+\zeta_3\sqrt[3]{\frac{s^2}{4}}\right)
\left(x+\zeta_3^2\sqrt[3]{\frac{s^2}{4}}\right).
$$
By letting $\displaystyle{e_j=-\zeta_3^{1-j} \sqrt[3]{\frac{s^2}{4}}}$,
it corresponds to the Weierstrass standard form 
$$
(\wp')^2 = 4 \wp^3 -g_3=4(\wp-e_1)(\wp-e_2)(\wp-e_3),
$$
where $g_3 =-4s^2$ using the Weierstrass $\wp$-function,
$$
\wp(u)=x(u),\quad \wp'(u) = 2y(u)-s, \quad
y(u)=\frac{1}{2}(\wp'(u)+s),
$$
for the elliptic integral
$$
u = \int^{(x,y)}_\infty d u, \quad d u = \frac{d x}{2y -s }.
$$
It is known that since the image of the incomplete
elliptic integral agrees with the complex plane $\CC$,
the $\wp$-function (and thus $x$ and $y$) is expressed by
the Weierstrass sigma function,
$$
\wp(u) = -\frac{d^2}{d u^2} \log\sigma(u), \quad
\left(y(u) = -\frac{1}{2}
\left(\frac{d^3}{d u^3} \log \sigma(u) +s\right)\right).
$$
It means that 
$x(u)$ and $y(u)$ are considered as meromorphic functions on $\CC$.
The trigonal cyclic symmetry induces the action on $u$ and 
sigma function, i.e., for $u \in \CC$,
$$
\sigma(\zeta_3 u) = \zeta_3 \sigma(u), \quad
\wp(\zeta_3 u ) = \zeta_3 \wp(u), \quad
\wp'(\zeta_3 u) = \wp'(u).
$$
Eilbeck, Matsutani and \^Onishi
showed an addition formula of the elliptic sigma function of the curve $E$
\cite{EMO},
\begin{equation}
\frac{\sigma(u-v) \sigma(u-\zeta_3 v) \sigma(u-\zeta_3^2 v)}
{\sigma(u)^3 \sigma(v)^3 }= (y(u)-y(v)).
\label{eq:1-1}
\end{equation}
In this article, we consider the curve $E$ and this formula (\ref{eq:1-1}).

Let the elliptic integral from the infinity point $\infty$ to $(x,y)=(0, s)$
denoted by $\omega_s$, and similarly that to $(0,0)$ by $\omega_0$,
$$
    \omega_s = \int^{(0,s)}_\infty d u,\quad
   \omega_0 = \int^{(0,0)}_\infty d u, \quad d u =\frac{d x}{2y-s}.
$$
The complete elliptic integrals of the first and the second kinds are
given by
$$
\omega_i:= \int^{e_i}_\infty d u,\quad
\eta_i = \int_{\infty}^{e_i} x d u, \quad (i=1,2,3).
$$
Following Weierstrass' convention
$$
\omega' =\omega_1,\quad\omega''=\omega_3,\quad
\eta'=\eta_1,\quad\eta''=\eta_3,
$$
they satisfy the relations \cite{O, FKMPA}
$$
\omega''=\zeta_3\omega',\quad \eta''=\zeta_3^2\eta',
$$
$$
\omega_1+\omega_2+\omega_3=0, \quad
\eta_1+\eta_2+\eta_3 = 0, \quad
\eta'\omega''-\eta''\omega'=\frac{\pi\ii}{2}.
$$
Further for the branch points 
$(0,0)$ and $(s,0)$, 
the following relations hold:
\begin{lemma}\label{lm:2.2}
$$
x(\omega_0) = x(\omega_s) = 0,\quad
y(\omega_0) = 0,\quad
y(\omega_s) = s,
$$
$$
\omega_s =\frac{1-\zeta_3^2}{3}2\omega',\quad
\omega_0=-\omega_s, \quad
\omega'=\frac{3}{2}\frac{1}{\zeta_3-\zeta_3^2}
\frac{\Gamma\left(\frac{1}{3}\right)^2}
{s^{1/3}\Gamma\left(\frac{2}{3}\right)},\quad
\eta'=\frac{\pi\ii}{3\sqrt{3}}
\frac{s^{1/3}\Gamma\left(\frac{2}{3}\right)}
{\Gamma\left(\frac{1}{3}\right)^2}.
$$
\end{lemma}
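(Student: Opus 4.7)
The plan is to dispatch the four groups of assertions in turn, beginning with the definitional ones and culminating in the explicit closed-form evaluations.

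First, the identities $x(\omega_0)=x(\omega_s)=0$, $y(\omega_0)=0$, $y(\omega_s)=s$ are immediate from the definitions of $\omega_0$ and $\omega_s$ as the values of the Abel map at the specified points. For $\omega_0=-\omega_s$ I would invoke the hyperelliptic involution $\iota:(x,y)\mapsto(x,s-y)$ on $E$, which swaps $(0,s)$ and $(0,0)$. Since $\wp(u)=x(u)$ is even and $\wp'(u)=2y(u)-s$ is odd, we have $x(-u)=x(u)$ and $y(-u)=s-y(u)$, so $\iota$ is realized on the universal cover by $u\mapsto -u$; hence $-\omega_s$ lands at $(0,0)$, giving $\omega_0=-\omega_s$ for compatibly chosen integration paths.

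Next, for $\omega_s=\tfrac{2(1-\zeta_3^2)}{3}\omega'$, I would exploit that $(0,s)$ is a fixed point of $\hat{\zeta}_3$. Under the lift $u\mapsto\zeta_3 u$ this forces $\zeta_3\omega_s\equiv\omega_s$ modulo the period lattice $\Lambda=2\omega'\,\ZZ[\zeta_3]$ (recalling $\omega''=\zeta_3\omega'$). Consequently $(\zeta_3-1)\omega_s\in\Lambda$; combined with the ring identity $3=(1-\zeta_3)(1-\zeta_3^2)$, this places $\omega_s$ in the rank-one $\ZZ[\zeta_3]$-module $\tfrac{2\omega'(\zeta_3^2-1)}{3}\ZZ[\zeta_3]$. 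Modulo $\Lambda$ there are exactly three residues satisfying this constraint, matching the three fixed points $\{0,\omega_s,\omega_0\}$ of $\hat{\zeta}_3$ on $E$; the candidate $\tfrac{2(1-\zeta_3^2)}{3}\omega'$ lies in the correct coset, and pinning down that it is $\omega_s$ rather than its negative $\omega_0$ requires matching the integration contour defining $\omega_s$ against those for $\omega'$ and $\omega''$ under the rotation $x\mapsto\zeta_3 x$.

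The closed-form evaluations of $\omega'$ and $\eta'$ reduce to Euler beta integrals. Writing $e_1=-(s^2/4)^{1/3}$ and substituting $x=e_1 t$ in $\omega'=\int_\infty^{e_1}dx/(2y-s)$ gives $(2y-s)^2=4x^3+s^2=s^2(1-t^3)$; a further substitution $u=t^3$ on $0<t<1$ and a mirror substitution on $t<0$ recasts the integrand in the form $u^{-2/3}(1\pm u)^{-1/2}$, and Euler's integral $B(\mu,\nu)=\int_0^\infty u^{\mu-1}(1+u)^{-\mu-\nu}\,du=\Gamma(\mu)\Gamma(\nu)/\Gamma(\mu+\nu)$ with $\mu=1/3$ applies directly. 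Combining the two contributions and invoking the duplication identity $\Gamma(1/6)\Gamma(2/3)=2^{2/3}\sqrt{\pi}\,\Gamma(1/3)$ together with reflection $\Gamma(1/6)\Gamma(5/6)=2\pi$ delivers $\omega'$ in the stated form, with the complex prefactor $3/(2(\zeta_3-\zeta_3^2))$ arising from the branch of $\sqrt{4x^3+s^2}$ along the contour. The evaluation of $\eta'=\int_\infty^{e_1}x\,du$ is parallel, and as a consistency check its phase can be verified against the Legendre relation $\eta'\omega''-\eta''\omega'=\pi\ii/2$ combined with $\omega''=\zeta_3\omega'$ and $\eta''=\zeta_3^2\eta'$.

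The principal obstacle will be coherent branch and path bookkeeping. The cube root $(s^2/4)^{1/3}$, the square root $\sqrt{4x^3+s^2}$, and the specific contours from $\infty$ to $e_1$ and from $\infty$ to $(0,s)$ must all be chosen compatibly, so that the relations among $\omega_s$, $\omega_0$, $\omega'$, $\omega''$ hold as equalities in $\CC$ rather than merely modulo $\Lambda$; otherwise the precise representative $\tfrac{2(1-\zeta_3^2)}{3}\omega'$, rather than a $\ZZ[\zeta_3]$-translate, cannot be singled out.
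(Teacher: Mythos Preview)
Your proposal is sound as an outline, and in fact it goes considerably further than the paper itself: the paper's entire proof of this lemma is a citation to \cite[Appendix~C]{FKMPA}, with no argument given. So there is no like-for-like comparison to make; you are supplying what the paper defers.

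On the substance: your treatment of the first block of identities and of $\omega_0=-\omega_s$ via the hyperelliptic involution is clean and correct. The fixed-point argument for $\omega_s$ is the right idea---$(0,s)$ is $\hat\zeta_3$-invariant, so $(\zeta_3-1)\omega_s\in\Lambda$, and since $(\zeta_3-1)$ has norm $3$ in $\ZZ[\zeta_3]$ this pins $\omega_s$ down to one of three cosets, matching the three ramification points $\infty,(0,0),(0,s)$ of $y:E\to\PP^1$. You are also right to flag that selecting the specific representative $\tfrac{2(1-\zeta_3^2)}{3}\omega'$ (as opposed to its negative) depends on coherent contour conventions; this is genuinely the delicate point, and the paper's statement implicitly fixes such conventions without spelling them out. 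The beta-integral route to the closed forms for $\omega'$ and $\eta'$ is the standard one for the equianharmonic case and will go through; your Legendre-relation cross-check for $\eta'$ is a good sanity test, since $\eta'\omega''-\eta''\omega'=(\zeta_3-\zeta_3^2)\eta'\omega'=\pi\ii/2$ directly constrains the product $\eta'\omega'$.
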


\begin{proof} See \cite[Appendix C]{FKMPA}.\end{proof}

The image of the incomplete elliptic integrals is acted by
 $\SL(2, \ZZ)$ and the cyclic group $\zeta_3$.
For $u, v\in \CC$ $(v\neq0)$, we define a lattice
\begin{equation}
\cZ_{v,u}:=\ZZ[\zeta_3]v+u:=
\{\ell_1v +\ell_2 \zeta_3 v +u\ | \ \ell_1, \ell_2\in \ZZ\}.
\label{eq:Zv}
\end{equation}
Noting $\zeta_6 = \zeta_3+1$
for $\zeta_6:=\ee^{2\pi \ii/6}$,
 $\ZZ[\zeta_3]=\ZZ[\zeta_6]$.
Since $\cZ_{2\omega',0}$ agrees with the lattice of the periodicity, i.e.,
$x(u+L)=x(u)$, $y(u+L)=y(u)$ for $L \in \cZ_{2\omega',0}$,
the Jacobian $\cJ_E$ of the curve $E$ is given by 
$$
\cJ_E=\CC/ \cZ_{2\omega',0}, \quad
\kappa: \CC \to \cJ_E.
$$
These points of the integrals for the branch points of the curve $E$
are illustrated in Figure \ref{fig:omega}.
We also regard $x$ and $y$ as meromorphic functions on
 $\cJ_E$ due to their periodicity.

Further 
Lemma \ref{lm:2.2} shows that
$\omega_s$ and $\omega_0$ belong to $\frac{1}{3} \cZ_{2\omega',0}$.

\begin{figure}[ht]
\begin{center}
\includegraphics[width=0.65\hsize]{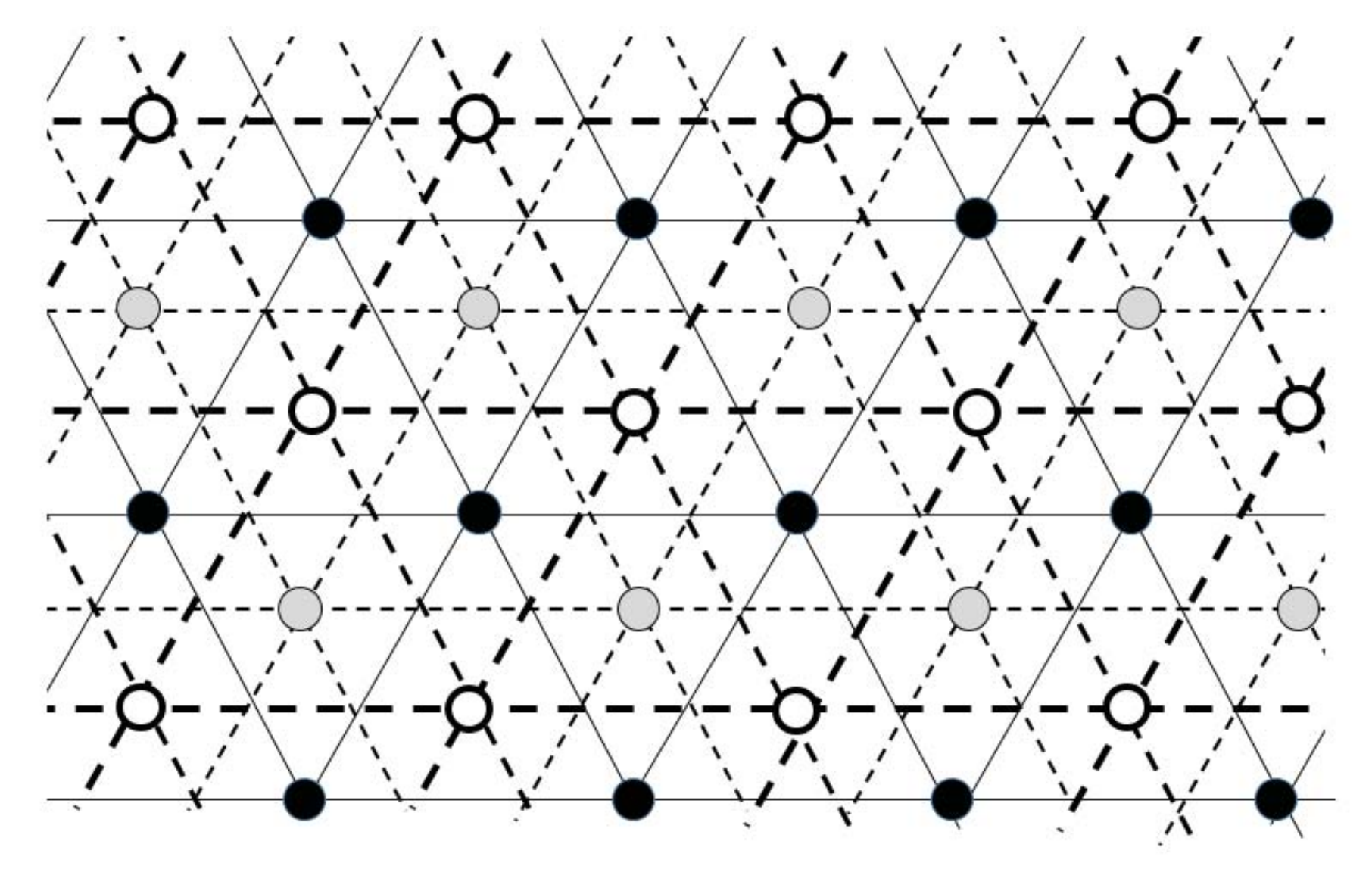}
\end{center}
\caption{
The lattice points of $\cZ_{2\omega',0}$:
The lattice points of $\cZ_{2\omega',0}$ are denoted by the black dots,
$\omega_0$ and $\omega_s$ with $\cZ_{2\omega',0}$ 
translations are denoted by
gray dots and white dots respectively.
}
\label{fig:omega}
\end{figure}

\section{The trigonal Toda lattice equation}

The addition formula (\ref{eq:1-1}) gives the following lemma:

\begin{lemma} \label{lmm:toda01}
The quantity
$$
q(u,v) := \log(y(u) - y(v)), \quad \left(y(u)-y(v) = \ee^{q(u,v)}\right)
$$
satisfies the relation
$$
-\frac{1}{2}\frac{d^3}{d u^3} q(u,v) = \ee^{q(u-v,v)}+\ee^{q(u-\zeta_3 v,v)}
+\ee^{q(u-\zeta_3^2 v,v)}-3\ee^{q(u,v)}.
$$
\end{lemma}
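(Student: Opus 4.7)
The plan is to take the logarithm of the addition formula (\ref{eq:1-1}) and differentiate three times with respect to $u$, then recognize both sides as the same symmetric combination of values of $y$.

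First I would apply $\log$ to (\ref{eq:1-1}) to obtain
$$
q(u,v) = \log\sigma(u-v) + \log\sigma(u-\zeta_3 v) + \log\sigma(u-\zeta_3^2 v) - 3\log\sigma(u) - 3\log\sigma(v).
$$
The last term is independent of $u$, so it drops out after differentiation in $u$. Recalling from Section 2 that $\wp(u) = -(\log\sigma)''(u)$ and $\wp'(u) = 2y(u)-s$, I have $(\log\sigma)'''(u) = -(2y(u)-s)$. Applying $d^3/du^3$ term by term therefore gives
$$
\frac{d^3}{du^3}q(u,v) = -2\bigl(y(u-v)+y(u-\zeta_3 v)+y(u-\zeta_3^2 v)\bigr) + 6\,y(u),
$$
since the four $s$-contributions combine as $3s-3s=0$. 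Multiplying by $-1/2$ yields
$$
-\tfrac{1}{2}\tfrac{d^3}{du^3}q(u,v) = y(u-v)+y(u-\zeta_3 v)+y(u-\zeta_3^2 v) - 3\,y(u).
$$

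Next I would expand the right-hand side of the claimed identity using $\ee^{q(w,v)} = y(w)-y(v)$:
$$
\ee^{q(u-v,v)}+\ee^{q(u-\zeta_3 v,v)}+\ee^{q(u-\zeta_3^2 v,v)} - 3\ee^{q(u,v)}
= y(u-v)+y(u-\zeta_3 v)+y(u-\zeta_3^2 v) - 3\,y(u),
$$
where the four $y(v)$-terms contribute $-3y(v)+3y(v)=0$ and drop out. Comparing the two displays finishes the proof.

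There is no real obstacle; the verification is entirely mechanical. The only point worth emphasizing is the double cancellation —- the constant $s$ on the differentiated side and the function $y(v)$ on the exponentiated side both vanish —- which is what makes the addition formula of \cite{EMO} package cleanly into a Toda-type relation in the variable $u$ alone, with $v$ playing the role of a lattice step. I would present the computation in this two-step form so that the structural parallel with the classical derivation in \cite{To, M, KdMP} (where $(\log\sigma)''$ is used in the genus-one case) is transparent: here it is $(\log\sigma)'''$ that intervenes, reflecting the trigonal cyclic symmetry of $E$.
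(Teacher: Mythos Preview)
Your proof is correct and follows exactly the same route as the paper's own argument: take logarithms in (\ref{eq:1-1}), differentiate three times in $u$, and use $(\log\sigma)'''=-\wp'=-(2y-s)$. The paper records this in a single sentence and the resulting display, while you have spelled out the two cancellations (of $s$ on the differentiated side and of $y(v)$ on the exponentiated side) explicitly; these details are implicit in the paper's version but your write-up makes the mechanism clearer.
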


\begin{proof}
We consider the logarithm of both sides of (\ref{eq:1-1}) and differentiate
both side three times with respect to $u$. Then we obtain
$$
-\frac{d^3}{d u^3}\log(y(u) - y(v)) = 2 (y(u-v)+y(u-\zeta_3 v)
+y(u-\zeta_3^2 v)) - 6(y(u)).
$$
\end{proof}

\bigskip

The right hand side in Lemma \ref{lmm:toda01} 
should be expressed by a difference operator.
In order to express it, we prepare the geometry associated with
 Lemma \ref{lmm:toda01}.

We fix the complex numbers $u_0$ and $v_0$.
We regard $\cZ_{v_0,u_0}$ as the set of nodes 
$\cN_{v_0,u_0}$ of an infinite directed (oriented)
6-regular graph $\cG_{v_0,u_0}$ whose incoming degree and outgoing degree 
at each node are three, $\cN_{v_0,u_0}=\cZ_{v_0,u_0}$, 
as  $\cG_{v_0,u_0}$ is illustrated in Figure \ref{fig:dG} \cite{GR}.
Every node $n_{\ell}$ in $\cN_{v_0,u_0}$ is labeled by an 
Eisenstein integer $\ell\in \ZZ[\zeta_3]$.

\begin{figure}[ht]
\begin{center}
\includegraphics[width=0.45\hsize]{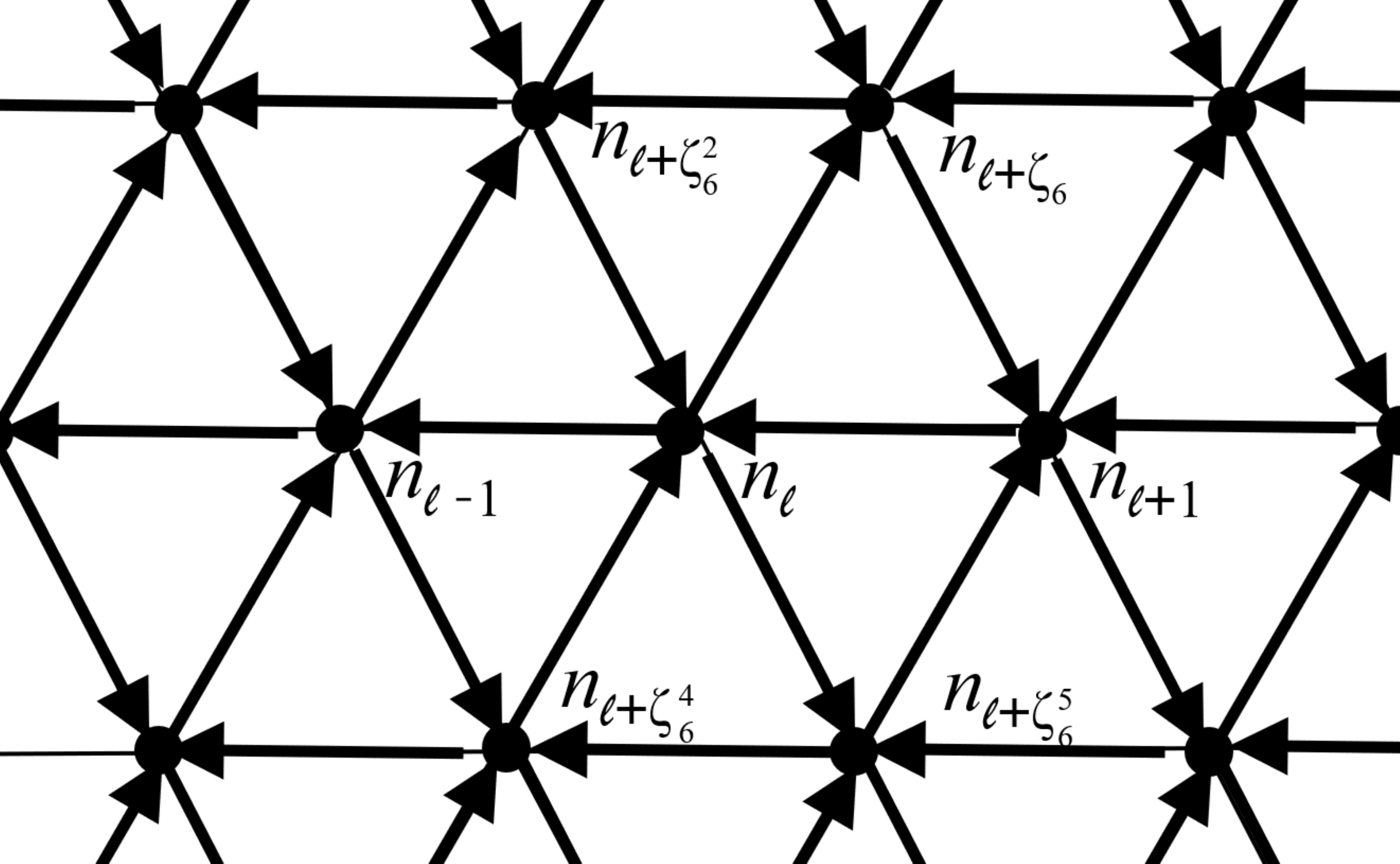}
\end{center}
\caption{
The directed graph $\cG_{v_0, u_0}$.
}
\label{fig:dG}
\end{figure}

It is noted that
 every $v_0 \in \CC$ is decomposed to $v_0=v_0'\omega' + v_0''\omega''$ 
using $v_0', v_0''\in \RR$.
Further the quotient set of the lattice points modulo $\cZ_{2\omega',0}$ 
is denoted by $\cN_{v_0, u_0}/\cZ_{2\omega',0}$ $=\kappa(\cN_{v_0, u_0})$.
The following are obvious:
\begin{lemma} \label{lemma:NZ}
\begin{enumerate}
\item
For $v_0 = v_0'\omega' + v_0''\omega''$ of $v_0', v_0''\in \QQ
\bigcap[0,2]$,
the cardinality $
|\cN_{v_0, u_0}/\cZ_{2\omega',0}| 
$ is finite for every $u_0 \in \CC$, and

\item
for $v_0 = v_0'\omega' + v_0''\omega''$ of $v_0', v_0''\in 
(\RR\setminus\QQ)\bigcap[0,2]$, $\cN_{v_0, u_0}/\cZ_{2\omega',0}$ is
dense in $\cJ_E$ for every $u_0 \in \CC$.
\end{enumerate}
\end{lemma}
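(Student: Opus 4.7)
The plan is to reduce both parts to understanding the subgroup $H \leq \cJ_E$ generated by the images of $v_0$ and $\zeta_3 v_0$; translation by $\kappa(u_0)$ preserves both cardinality and density. In the real basis $(\omega', \omega'')$ one identifies $\cJ_E$ with $(\RR/2\ZZ)^2$, and, using $\omega'' = \zeta_3\omega'$ together with $\zeta_3^2 = -1 - \zeta_3$, the classes of $v_0$ and $\zeta_3 v_0$ work out to $(v_0', v_0'')$ and $(-v_0'', v_0' - v_0'')$ respectively.

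Part (1) is a denominator-clearing argument. Taking a common positive denominator $N$ with $v_0' = p/N$ and $v_0'' = q/N$, I would observe that $2Nv_0 = p(2\omega') + q(2\omega'') \in \cZ_{2\omega', 0}$. Since $\cZ_{2\omega', 0} = 2\omega'\ZZ[\zeta_3]$ is stable under multiplication by $\zeta_3$, the evaluation map $\ZZ[\zeta_3] \to \cN_{v_0, u_0}/\cZ_{2\omega', 0}$, $\ell \mapsto \overline{u_0 + \ell v_0}$, factors through $\ZZ[\zeta_3]/(2N)$, so the quotient has at most $(2N)^2$ elements.

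Part (2) uses the Kronecker--Weyl criterion: $\overline{H} = \cJ_E$ if and only if every nontrivial character $\chi_{m,n}(x,y) = \ee^{\pi\ii(mx+ny)}$ with $(m,n) \in \ZZ^2 \setminus \{(0,0)\}$ is nontrivial on some generator of $H$. Triviality on both generators is the system $mv_0' + nv_0'' \in 2\ZZ$ and $nv_0' - (m+n)v_0'' \in 2\ZZ$. The crucial step is the linear elimination
\begin{equation*}
(m+n)(mv_0'+nv_0'') + n(nv_0'-(m+n)v_0'') = (m^2+mn+n^2)\,v_0',
\end{equation*}
together with the analogous identity producing $(m^2+mn+n^2)\,v_0''$. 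As $m^2+mn+n^2$ is the norm form of $\ZZ[\zeta_3]$, it is strictly positive for $(m,n)\ne(0,0)$; both $v_0'$ and $v_0''$ would then be forced to be rational, contradicting the hypothesis.

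The main obstacle is really the elimination in (2): it is the appearance of the norm form of $\ZZ[\zeta_3]$ that converts irrationality of each real coordinate separately into density on the full two-dimensional torus, and it is there that the trigonal symmetry of the lattice is used. Part (1) and the translation by $u_0$ are then formal.
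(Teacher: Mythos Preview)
Your argument is correct in both parts. The paper itself gives no proof of this lemma at all---it simply introduces it with ``The following are obvious:''---so there is no approach of the paper's to compare against. What you have written is a clean and complete verification of a fact the author left to the reader.

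A minor remark on part~(2): your elimination
\[
(m+n)\bigl(mv_0'+nv_0''\bigr)+n\bigl(nv_0'-(m+n)v_0''\bigr)=(m^2+mn+n^2)\,v_0'
\]
already forces $v_0'\in\QQ$ whenever a nontrivial character kills both generators, so you actually prove density under the weaker hypothesis that \emph{either} $v_0'$ or $v_0''$ is irrational; the lemma's hypothesis that both are irrational is stronger than necessary. The appearance of the positive-definite norm form $m^2+mn+n^2$ of $\ZZ[\zeta_3]$ is, as you say, exactly where the trigonal symmetry enters, and it is the genuinely nontrivial point behind the word ``obvious''.
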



Let us introduce the function spaces,
$\Omega$ and $\log\Omega$, 
$$
\Omega:=\{ Q: \CC \times \cN_{v_0, u_0} \to \PP\ |\  
\mbox{ meromorphic}\},\quad
\log\Omega:=\{q: \CC \times \cN_{v_0, u_0} \to \PP\ |\ 
\ee^q \in \Omega\}.
$$
For an Eisenstein integer $\ell \in \ZZ[\zeta_3]$ or 
$n_\ell\in\cN_{v_0, u_0}$, $t \in \CC$
and fixed $u_0, v_0\in \CC$,
 let us consider an element in $\log\Omega$,
\begin{equation}
q_{\ell}(t;u_0,v_0) := q(t+u_0-\ell v_0, v_0)=\log(y(t+u_0-\ell v_0) - y(v_0)),
\label{eq:q_ellvsy}
\end{equation}
which is denoted by $q_{\ell}(t)$ for brevity.

Lemma \ref{lmm:toda01} gives the nonlinear relation on $\log\Omega$:

\begin{proposition}\label{prop:toda02}
For $n_\ell\in\cN_{v_0, u_0}$, $t\in \CC$ and fixed $u_0, v_0\in \CC$,
$q_{\ell}(t)=q_{\ell}(t;u_0,v_0)$ satisfies the relation,
\begin{equation}
-\frac{1}{2}\frac{d^3}{dt^3} q_{\ell}(t) = \ee^{q_{\ell+1}(t)}
+\ee^{q_{\ell+\zeta_3}(t)}
+\ee^{q_{\ell+\zeta_3^2}(t)}-3\ee^{q_\ell(t)}.
\label{eq:prop3.1}
\end{equation}
\end{proposition}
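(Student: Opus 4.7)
The proposition is essentially a relabeling of Lemma \ref{lmm:toda01}, so the plan is to carry out the substitution carefully and verify that the indexing lands in $\ZZ[\zeta_3]$.

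First I would fix $u_0, v_0 \in \CC$ and for a given $\ell \in \ZZ[\zeta_3]$ set $u := t + u_0 - \ell v_0$ and $v := v_0$. Under this identification, the defining formula (\ref{eq:q_ellvsy}) reads $q_\ell(t) = q(u, v)$, and since $du/dt = 1$ the chain rule gives $\frac{d^3}{dt^3} q_\ell(t) = \frac{d^3}{du^3} q(u,v)$ evaluated at $(u,v)$. Thus the left-hand side of (\ref{eq:prop3.1}) equals the left-hand side of the identity in Lemma \ref{lmm:toda01}.

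Next I would handle the right-hand side by plugging the same $(u,v)$ into Lemma \ref{lmm:toda01}. For each $\xi \in \{1, \zeta_3, \zeta_3^2\}$ one has
$$
q(u - \xi v, v) = q\bigl(t + u_0 - (\ell + \xi) v_0,\ v_0\bigr) = q_{\ell + \xi}(t),
$$
which uses only the definition (\ref{eq:q_ellvsy}) with the shifted index $\ell + \xi$. To legitimize this, I would remark that because $1, \zeta_3 \in \ZZ[\zeta_3]$ and $\zeta_3^2 = -1 - \zeta_3 \in \ZZ[\zeta_3]$, each of $\ell + 1,\ \ell + \zeta_3,\ \ell + \zeta_3^2$ is again an Eisenstein integer, so the translated nodes $n_{\ell+\xi}$ lie in $\cN_{v_0, u_0}$ and the symbols $q_{\ell+\xi}(t)$ are well-defined elements of $\log \Omega$. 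Similarly $q(u, v) = q_\ell(t)$.

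Combining the last two paragraphs turns the identity of Lemma \ref{lmm:toda01} verbatim into (\ref{eq:prop3.1}). There is no genuine analytic obstacle; the only thing one has to be a little careful about is confirming that the triangular shifts $\xi \in \{1,\zeta_3,\zeta_3^2\}$ respect the lattice structure of $\ZZ[\zeta_3]$ so that the graph-theoretic reformulation on $\cG_{v_0,u_0}$ is consistent, and that the $t$-derivative on the left and the $u$-derivative in the lemma agree because the change of variable $u = t + u_0 - \ell v_0$ has unit Jacobian. Once these bookkeeping points are noted, the proof reduces to a direct citation of Lemma \ref{lmm:toda01}.
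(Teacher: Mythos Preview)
Your proof is correct and follows exactly the approach the paper intends: the paper does not write out a separate proof of Proposition~\ref{prop:toda02} at all, but simply says that Lemma~\ref{lmm:toda01} gives the relation, and your argument spells out the substitution $u=t+u_0-\ell v_0$, $v=v_0$ that makes this immediate. Your additional checks that $\ell+1,\ell+\zeta_3,\ell+\zeta_3^2\in\ZZ[\zeta_3]$ and that $du/dt=1$ are the only bookkeeping points needed, and you have handled them.
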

It is emphasized that (\ref{eq:prop3.1}) 
can be regarded as  a 
differential-difference non-linear equation
and its special solution is given by (\ref{eq:q_ellvsy}) 
for the elliptic curve
(\ref{eq:yysx3}).
Its derivation is basically the same as Toda's original derivation
of Toda lattice equation \cite{To} and that in \cite{M, KdMP}. 
Further it is related to
the infinite graph $\cG_{v_0, u_0}$.
Thus we will call this relation {\it{trigonal Toda lattice equation}}.

We recall $\zeta_6=\zeta_3+1$.
For a given $n_\ell  \in \cN_{v_0,u_0}$,
let us consider subgraph $\cG_\ell \subset \cG_{v_0,u_0}$ 
given by its nodes $\cN_\ell:=\{n_\ell , n_{\ell +1}, n_{\ell +\zeta_6}, 
n_{\ell +\zeta_6^2}, \ldots,
n_{\ell+\zeta_6^5}\}$ $(\subset \cN_{v_0,u_0})$;
$\cN_\ell$ consists of the center point $n_{\ell}$
 with a hexagon $n_{\ell+\zeta_6^i}$
$(i=0, 1, \ldots, 5)$.
The submatrix of 
the incoming adjacency matrix $\cA_{\mathrm{in}}$ for $\cG_\ell$
is given by
$$
\cA_{\mathrm{in}}|_{\cG_\ell}=\begin{pmatrix}
0 & 1 & 0 & 1 & 0 & 1 & 0\\
0 & 0 & 1 & 0 & 0 & 0 & 1\\
1 & 0 & 0 & 0 & 0 & 0 & 0\\
0 & 0 & 1 & 0 & 1 & 0 & 0\\
1 & 0 & 0 & 0 & 0 & 0 & 0\\
0 & 0 & 0 & 0 & 1 & 0 & 1\\
1 & 0 & 0 & 0 & 0 & 0 & 0
\end{pmatrix}.
$$
The incoming degree matrix is given by the diagonal matrix
$\cD_{\mathrm{in}}$ whose diagonal
element is three. Thus we define the incoming Laplacian \cite{GR},
$$
\Delta_{\mathrm{in}}:= \cD_{\mathrm{in}} - \cA_{\mathrm{in}}.
$$
Let us consider the functions $q\in \log\Omega$ and $e^q\in \Omega$ 
 whose components
at $n_\ell$ are given by $q_\ell(t)$ and $\ee^{q_\ell(t)}$.
We regard them as column vectors for each $\ell \in \ZZ[\zeta_3]$.
Then the Laplacian acts on the vector spaces.

Using the incoming Laplacian,
 Proposition \ref{prop:toda02} is reduced to the following formula.
\begin{proposition}\label{prop:toda03}
Using the above notations, (\ref{eq:prop3.1}) is written by
$$
\frac{d^3}{d t^3} q(t) = -\Delta_{\mathrm{in}} e^q(t).
$$
\end{proposition}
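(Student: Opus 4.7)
The plan is to verify the claimed vectorial identity componentwise at each lattice point $n_\ell \in \cN_{v_0,u_0}$, reducing it directly to Proposition \ref{prop:toda02}. The whole argument is a straightforward translation of (\ref{eq:prop3.1}) into the operator notation just introduced; no new analytic input is required.

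First I would unpack the action of $\Delta_{\mathrm{in}} = \cD_{\mathrm{in}} - \cA_{\mathrm{in}}$ on the column vector $e^q(t)$. Because $\cG_{v_0,u_0}$ is invariant under translation by the additive group $\ZZ[\zeta_3]$, the local adjacency pattern around every node $n_\ell$ is isomorphic to the explicit submatrix $\cA_{\mathrm{in}}|_{\cG_\ell}$ displayed above. Reading off its row indexed by $n_\ell$, and using $\zeta_6^2 = \zeta_3$ and $\zeta_6^4 = \zeta_3^2$, the incoming neighbours of $n_\ell$ are identified as $n_{\ell+1}$, $n_{\ell+\zeta_3}$, $n_{\ell+\zeta_3^2}$, so that
$$
(\cA_{\mathrm{in}} e^q(t))_\ell \;=\; e^{q_{\ell+1}(t)} + e^{q_{\ell+\zeta_3}(t)} + e^{q_{\ell+\zeta_3^2}(t)}.
$$

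Next I would use that the incoming degree at each node equals three, so that $\cD_{\mathrm{in}}$ acts as the scalar $3$ and $(\cD_{\mathrm{in}} e^q(t))_\ell = 3\,e^{q_\ell(t)}$. Subtracting gives
$$
(\Delta_{\mathrm{in}} e^q(t))_\ell \;=\; 3\,e^{q_\ell(t)} - e^{q_{\ell+1}(t)} - e^{q_{\ell+\zeta_3}(t)} - e^{q_{\ell+\zeta_3^2}(t)},
$$
which is, up to the normalising sign on the left of (\ref{eq:prop3.1}), exactly the expression appearing in Proposition \ref{prop:toda02}. Substituting into that identity and suppressing the $\ell$-index yields the claimed vectorial equation.

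There is essentially no obstacle: the step is a bookkeeping exercise. The one place where care is needed is the directed-graph convention, i.e.\ confirming that the row of $\cA_{\mathrm{in}}$ indexed by $n_\ell$ records its \emph{incoming} neighbours (rather than its outgoing ones) so that the three shifts $\{1,\zeta_3,\zeta_3^2\}$ on the right-hand side of (\ref{eq:prop3.1}) are correctly matched with the action of $\cA_{\mathrm{in}}$. Since this convention is pinned down by the explicit matrix written for $\cG_\ell$ and is preserved under $\ZZ[\zeta_3]$-translation, the matching is unambiguous and the proposition follows by inspection.
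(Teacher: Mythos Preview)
Your approach is correct and is exactly what the paper (which gives no explicit proof here) leaves to the reader: read off the $n_\ell$-row of $\cA_{\mathrm{in}}$, subtract from the diagonal degree $3$, and recognise the right-hand side of (\ref{eq:prop3.1}). One caveat: your phrase ``up to the normalising sign'' actually hides the full factor $-\tfrac12$ on the left of (\ref{eq:prop3.1}), so the literal comparison with the displayed formula in the proposition is off by a constant factor of $2$ as well as the sign; this is a slip in the paper's stated normalisation, not a gap in your reasoning, but you should flag it explicitly rather than absorb it.
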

It turns out that
the trigonal Toda lattice equation consists of the third order differential 
operators and trigonal graph Laplacian, which is a natural generalization of
the original Toda lattice equation \cite{To}, though it has not ever obtained
as far as we know.

As we obtain the equation, we will consider its solutions
(\ref{eq:q_ellvsy}), especially their initial condition $u_0$ and 
the configurations $\cG_{v_0,u_0}$:

\begin{remark}\label{rmk:21}
{\rm{
\begin{enumerate}
\item The domain of the solution $e^{q_\ell}$ of 
(\ref{eq:q_ellvsy}) is the Jacobian $\cJ_E$;
for $L \in \cZ_{2\omega',0}$, $e^{q_\ell}(t+L)=e^{q_\ell}(t)$.
From Lemma \ref{lm:2.2}, 
the periods $2\omega'$ and $2\omega''$ are scaled by $s^{-1/3}$.
Further in the projection $\pi:E \to \PP$, $(\pi((x,y))=y)$,
which determines the three special points $(0,s, \infty)$ in $\PP$,
the range of the solution $e^{q_\ell}$
 as a meromorphic function on $E$ is also parameterized by $s$
via $y$ and the governing equation (\ref{eq:yysx3}).
It is easy to find the $s$-dependence of $e^{q_\ell}$
and thus we may consider $s$ as a finite real number.

\item
For $v_0 (\neq 0)$ such that $v_0\neq \omega_0$,
$q(u,v_0)$ as a function with respect to $u$
diverges only at the points in $\cZ_{2\omega',0}$
and $\displaystyle{\bigcup_{i=0}^2(\zeta_3^i v_0 + \cZ_{2\omega',0})}$.
Their union is denoted by $\cS_{v_0}$.
It means that for an $\ell \in \ZZ[\zeta_3]$,
if the orbit of $q_\ell(t; u_0, v_0)$ in $t$
 avoids
 $\cS_{v_0}$, 
the value of $q_\ell(t; u_0, v_0)$ is finite.  

Let us consider its orbit whose value is finite value.
We restrict its domain 
$\CC \times \cZ_{v_0, u_0}$ to its real subspace 
$\RR\ee^{\alpha\ii} \times \cZ_{v_0, u_0}$ for 
a certain unit direction $\ee^{\alpha\ii}$ (i.e., $|\ee^{\alpha\ii}|=1$),

For the case (2) in Lemma \ref{lemma:NZ}, 
there are infinite many points at which 
$|q_\ell(t; u_0, v_0)|$ is greater than for every 
given positive number $1/\varepsilon$.
Thus we should employ the case (1) in Lemma \ref{lemma:NZ}.

\item
Let us assume that
$K:=|\cN_{v_0, u_0}/\cZ_{2\omega',0}|$ is finite.
For a certain direction $\ee^{\alpha\ii}$ in the 
complex plane and $u_0 \neq 0$,
we find the subspace $\ee^{\alpha\ii}\RR$ in $\CC$ such that
every $q_\ell(t_r \ee^{\alpha\ii}; u_0, v_0)$  does not diverge
for each $\ell \in \ZZ[\zeta_3]$
and $t\in \RR$, and satisfies
the trigonal Toda lattice equation,
\begin{equation}
\frac{d^3}{d t_r^3} q(t_r\ee^{\alpha\ii}) 
= -\ee^{3\alpha\ii}\Delta_{\mathrm{in}} e^q(t_r\ee^{\alpha\ii}).
\label{eq:Rmk:21}
\end{equation}
The conditions on $\ee^{\alpha\ii}$ and $u_0$ correspond to the conditions on
the embedding $\iota$ of $\RR^K$ into $\cJ_E$
such that the image of $\iota$ is compact and
disjoint from  $\cS_{v_0}/\cZ_{2\omega',0}$.
Under these conditions, we have the complex valued finite solutions of 
the trigonal Toda lattice equation (\ref{eq:Rmk:21}).

\end{enumerate}
}}
\end{remark}

An elliptic function solution of this equation is illustrated in 
Figure \ref{fig:esol} for 
 $v_0 = (1+\zeta_6)\omega'/13$, $u_0=v_0/2$,
$\ee^{\alpha\ii}=\omega'/|\omega'|$ and $s=1.0$.
\begin{figure}[ht]
\begin{center}

\hskip 0.1\hsize
\includegraphics[width=0.65\hsize]{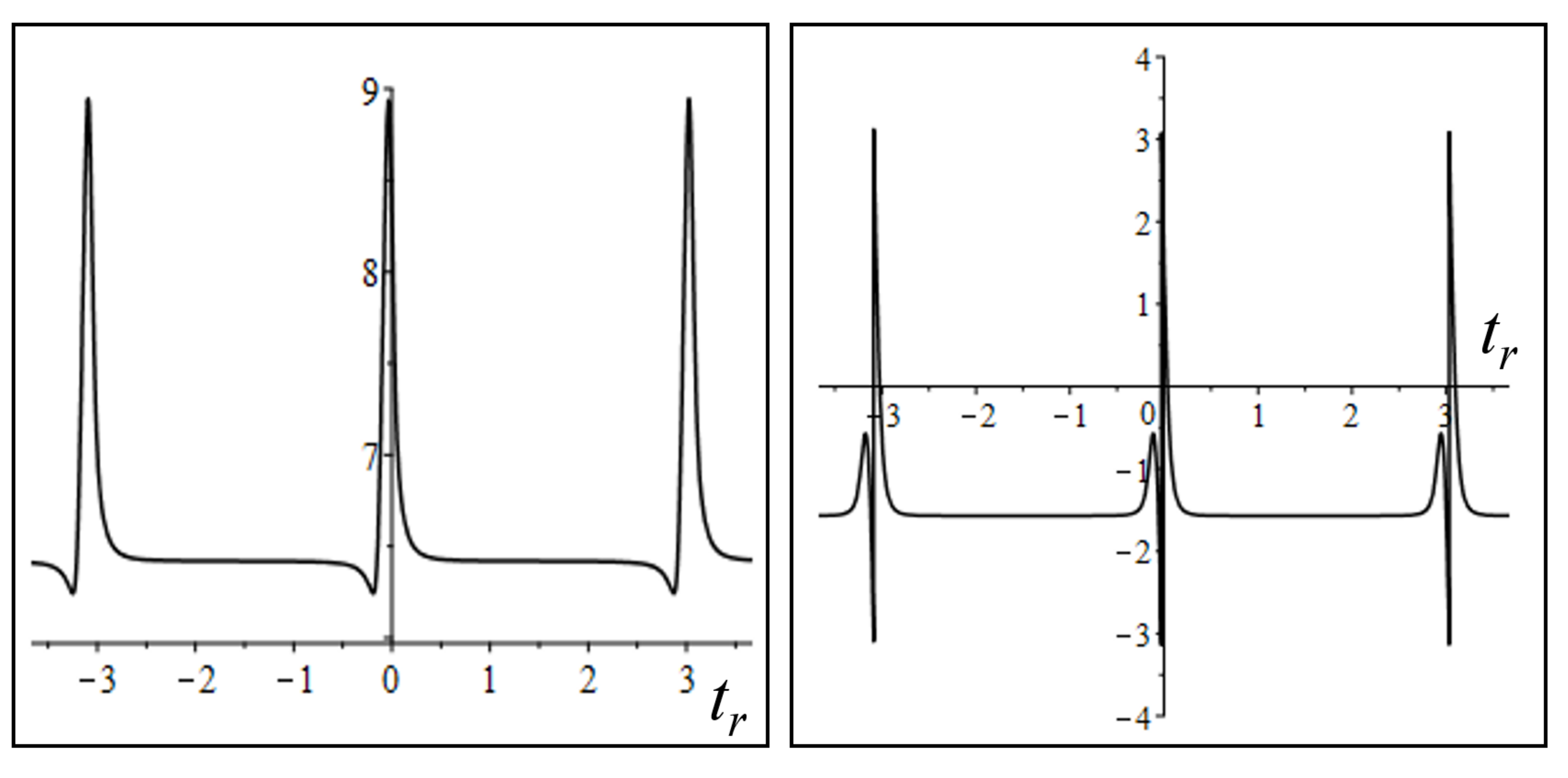}\newline

(a) \hskip 0.3\hsize (b)
\end{center}
\caption{
An elliptic solution of the trigonal Toda lattice equation
$q_0(t_r\ee^{\alpha\ii})$
 at $\ell=0$ for $v_0 = (1+\zeta_6)\omega'/13$, $u_0=v_0/2$,
$\ee^{\alpha\ii}=\omega'/|\omega'|$ and $s=1.0$, and $t_r\in \RR$:
(a) shows its real part whereas (b) corresponds to its imaginary part.}
\label{fig:esol}
\end{figure}

Let us consider the continuum limit of the the trigonal Toda lattice equation
as follows:
\begin{remark}\label{rmk:22}
{\rm{
\begin{enumerate}
\item It is noted that  $q(u,v_0)$  diverges for the limit $v_0 \to 0$ and thus
for this elliptic function solution $q_\ell(t)$,
we cannot obtain the continuum limit of the graph Laplacian 
$\Delta_{\mathrm{in}}$ and of the trigonal Toda lattice equation.

\item The elliptic curve $E$ becomes the three rational curves for 
the limit $s\to0$ as in \cite[Appendix C]{FKMPA}, and 
$y$ behaves like $\displaystyle{y=-\frac{1}{u^3}+\frac{1}{2} s^2 u^3 + o(s^3)}$
\cite{EMO}.
In the limit, 
the trigonal Toda lattice equation does not satisfy.
\end{enumerate}
}}
\end{remark}

\section{Discussion}

We derived the trigonal Toda lattice equation in
Propositions \ref{prop:toda02}
and \ref{prop:toda03}
based on the addition
formula (\ref{eq:1-1})
for the curve $E$ associated with the automorphism of the curve.
It is associated with the lattice, or the directed 6-regular
graph, given by the Eisenstein integers 
$\ZZ[\zeta_3]$.
It means that we have an nonlinear equation on the lattice and its
elliptic function solution.
Since there are  physical models based on the triangle lattice 
given by the infinite 6-regular graph
\cite{B},
this trigonal Toda lattice equation might describe a nonlinear excitation
in the models.

The third order differential equation reminds us of the Chazy equation,
which is a third order ordinary differential equation and posses 
Painlev\'e property.\cite{CO}.
However the trigonal
 Toda lattice equation cannot have a non-trivial continuum limit
because $E$ becomes the three rational curves for the limit $s\to0$ 
\cite{FKMPA} and $q_\ell(t; u_0, v_0)$ diverges  for the limit $v_0\to0$ as in 
Remark \ref{rmk:22}.
In other words, we could not directly 
argue the integrablity of the trigonal Toda 
lattice equation using the 
Chazy equation, even 
though both elliptic function solutions are closely related.
It means, in this stage, 
that it is not obvious whether the trigonal Toda lattice equation is 
an integrable equation as a time-development equation, 
and thus it is an open problem to determine the behavior of
its solution for every initial state as an initial value problem.

However the addition theorem in \cite[(A.3)]{EEMOP} for 
the genus three curve can be regarded as 
a generalization of the addition formula (\ref{eq:1-1}) for the cyclic action 
$\hzeta_3$ on curves. Thus it is expected that
the trigonal Toda lattice equation might have algebro-geometric solutions of
algebraic curves of higher genus.
Further this approach could be
generalized to more general curves, e.g., the genus three curve 
\cite{EEMOP} and more general curves with a trigonal cyclic group 
\cite{KmMP}.

\bigskip

\noindent
Shigeki Matsutani:\\
Faculty of Electrical, Information and Communication Engineering,\\
Institute of Science and Engineering\\
Kanazawa University,\\
Kakuma Kanazawa, 920-1192 JAPAN,\\
\texttt{s-matsutani@se.kanazawa-u.ac.jp}
\bigskip

\end{document}